\newtheorem{theorem}{Theorem}
\begin{document}
\IEEEoverridecommandlockouts

\title{The Capacity Region of the Cognitive Z-interference Channel with One Noiseless Component}
\author{%
  \authorblockN{Nan Liu\authorrefmark{1},
     Ivana Mari\'c\authorrefmark{1},
     Andrea J. Goldsmith\authorrefmark{1},
     Shlomo Shamai (Shitz) \authorrefmark{2}
  }\\
  \authorblockA{%
  \authorrefmark{1}Dept.\ of Electrical Engineering,
                     Stanford University, Stanford, CA 94305
  }
    \authorblockA{%
    \authorrefmark{2}Department of Electrical Engineering, Technion, Technion City,
Haifa 32000, Israel\\
  }
  Email:  \texttt{\ {\small \{nanliu@stanford.edu, ivanam@wsl.stanford.edu, andrea@wsl.stanford.edu,  sshlomo@ee.technion.ac.il\} } }
\thanks{The work of N. Liu, I. Mari\'c and A. J. Goldsmith was supported in part from
the DARPA ITMANET program under grant 1105741-1-TFIND, Stanford's
Clean Slate Design for the Internet Program and the ARO under MURI
award W911NF-05-1-0246. The Work of S. Shamai was supported by the
ISRC Consortium and by  the European Commission in the framework of
the FP7 Network of Excellence in Wireless COMmunications NEWCOM++.}
}
\date{}
\maketitle

\begin{abstract}
We study the discrete memoryless Z-interference channel (ZIC) where
the transmitter of the pair that suffers from interference is
cognitive.  We first provide upper and lower bounds on the capacity
of this channel. We then show that,  when the channel of the
transmitter-receiver pair that does not face interference is
noiseless, the two bounds coincide and therefore yield the capacity
region. The obtained results imply that, unlike in the Gaussian
cognitive ZIC, in the considered channel  superposition encoding at
the non-cognitive transmitter as well as Gel'fand-Pinsker encoding
at the cognitive transmitter are needed in order to minimize the
impact of interference. As a byproduct of the obtained capacity
region, we obtain the capacity result for a generalized
Gel'fand-Pinsker problem.
\end{abstract}

\vspace*{0.4cm} \noindent {\em Index terms:} Cognitive interference
channel, capacity region, Z-interference channel, one-sided
interference channel

\newpage

\section{Introduction}
The interference channel (IC) \cite{Shannon:1961} is a simple
network consisting of two transmitter-receiver pairs. Each pair
wishes to reliably communicate at a certain rate, however, the two
communications interfere with each other. A key issue in such
scenarios then,  is  how to handle the interference introduced by
the simultaneous transmissions.  This issue is not yet fully
understood,  and the problem of finding the capacity region of the
IC  remains open, except in special
cases\cite{Ahlswede:1971,Sato:1977,Carleial:1978,Benzel:1979,ElGamal:1982,Carleial:1975,
Sato: 1978MayIC, Han: 1981, Sato:1981,Costa:1987,
Liu_Ulukus:2006Allerton}. For a tutorial on the capacity results of
the IC, see \cite{Kramer:2006}. The Z-interference channel (ZIC) is
an IC where one transmitter-receiver pair is interference-free.
Although this is a simpler channel model than the IC, capacity
results are still known only in special cases \cite[Section
IV]{ElGamal:1982}, \cite{Sason:2004,Ahlswede:2006,
Liu_Goldsmith:2008ISIT}.

In certain communication scenarios, such as cognitive radio
networks, some transmitters are cognitive, i.e., are able to sense
the environment and thus obtain side information about transmissions
in their vicinity. Perhaps due to the exciting promise of the
cognitive radio technology to improve the bandwidth utilization and
thus allow for new wireless services and a higher quality of
service, the IC with one cognitive transmitter has been studied
extensively \cite{Devroye:2006, MYK2007, WuVishwanath2006,
JovicicViswanath2006, MGKSETT2007, JiangXin2007, Cao:2008,
SridharanVishwanath2007}. Related  channel models were also analyzed
in \cite{LiangBaruchPoorShamaiVerdu2007, Cao:2007}. In the model
considered in \cite{Devroye:2006, MYK2007, WuVishwanath2006,
JovicicViswanath2006, MGKSETT2007, JiangXin2007, Cao:2008,
SridharanVishwanath2007, LiangBaruchPoorShamaiVerdu2007}, it is
assumed that due to the cognitive capabilities,  the cognitive
encoder noncausally obtains the full message of the non-cognitive
transmitter. While this is a somewhat idealistic view of cognition
in a wireless network, this model
 applies  for example, to scenarios where the cognitive transmitter is a base station. Then, it can obtain side information via backhaul (high-capacity link such as an optical cable). This side information then  enables interference reduction \cite{Gelfand:1980} by precoding at the cognitive encoder. Furthermore, it enables cooperation with the non-cognitive pair. In fact, one of the main difficulties in finding the capacity region of the traditional IC comes from distributed encoding. IC with one cognitive transmitter enables one-sided transmitter cooperation, and thus allows centralized encoding to some degree. This may be the reason why determining the capacity region of the cognitive IC is somewhat easier than the traditional IC. In particular, while the capacity region of the Gaussian IC in weak interference is not known (the sum capacity in certain weak interference regimes has recently been found in \cite{Shang:2008, Anna:2008, Motahari:2008}), the capacity region of the \emph{cognitive} Gaussian IC in weak interference has been determined \cite{WuVishwanath2006, JovicicViswanath2006}.

In this paper, we study a ZIC where the transmitter of the pair that
suffers from interference is cognitive (see Fig.~\ref{Fig1}). The
capacity region of such a cognitive ZIC in the Gaussian case is
straightforward to obtain, since by using dirty-paper coding
\cite{Costa:1983} at the cognitive encoder, both communicating pairs
can  achieve the interference-free, single-user rates. However,
limiting the study of the cognitive ZIC to the Gaussian case leaves
some unsatisfaction to the understanding of the problem. Firstly, it
does not provide intuition as to how the interferer's rate affects
the rate of the cognitive transmitter-receiver pair in a general
channel. Secondly, it does not provide the insight into the optimal
codebook structure for the non-cognitive encoder, so that it
minimizes interference caused for the cognitive pair.

Hence, in this paper, we study a \emph{discrete} memoryless
cognitive ZIC. We first derive an upper bound on the capacity
region. The technique that we use in obtaining the converse was
introduced by Korner and Marton in \cite{Korner:1977Image}, and was
proven to be useful in the solution of several problems in
multi-user information theory \cite{Korner:1977,Korner:1977Image,
Ahlswede:2006,Liu_Goldsmith:2008ISIT}, including  the
Gel'fand-Pinsker problem \cite{Gelfand:1980}. We apply this
technique \emph{twice} to obtain the upper bound on the capacity
region. Next, we derive a lower bound on the capacity region where
the non-cognitive pair uses superposition encoding to control the
amount of interference it causes for the cognitive pair. Unlike in
the IC, this encoding approach has not been applied in the cognitive
IC literature, with the exception of concurrent and independent work
\cite{Cao:2008}. Finally, we show that the lower and upper bounds
meet when the channel between the non-cognitive pair is noiseless.
We  denote this channel model as the cognitive ZIC with one
noiseless component. From the capacity results, we conclude that it
is optimal for the interference-causing (non-cognitive) pair to use
superposition encoding; the inner codeword is decoded by the
receiver of the cognitive pair while Gel'fand-Pinsker coding is
performed against the outer codeword at the cognitive transmitter.
\begin{figure}[t]
\center\epsfig{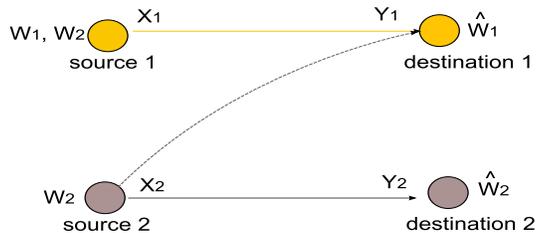} \caption[]{Cognitive Z-interference channel.}
      \label{Fig1}
\end{figure}

The capacity region of the discrete memoryless cognitive IC is known
in some special cases \cite{WuVishwanath2006, MYK2007, Cao:2008}.
The tight result we derive in this paper does not fall into these
special cases, as explained in more details in Section
\ref{capacity}. Furthermore, the cognitive ZIC with one noiseless
component is the first channel model for which superposition
encoding at the non-cognitive transmitter is not only required but
also optimal.

Note that, in general, the capacity region of the traditional ZIC in
which the interference-free transmitter-receiver pair is noiseless,
is unknown. The most we know about this scenario is the sum capacity
\cite{Ahlswede:2006}. Thus, the results in this paper provide yet
another example where finding the capacity region of the cognitive
IC is easier than that of the traditional IC due to the possibility
of centralized (joint) encoding by  the cognitive transmitter.

The considered problem is also intimately related to the
Gel'fand-Pinsker (GP) problem \cite{Gelfand:1980} where a
transmitter-receiver pair communicates in the presence of
interference noncausally known at the encoder (see
Fig.~\ref{GPproblem}). By viewing the non-cognitive encoder in the
cognitive ZIC as a source of this interference, we arrive to a
generalized GP  problem. Instead of the state being i.i.d. as in the
GP problem, in the generalized GP model considered in this paper,
the state is uniformly distributed on a set of size $2^{nR_2}$,
where $R_2$ is a number between $0$ and the logarithm of the
cardinality of the state space. The further generalization is  that,
unlike in \cite{Gelfand:1980}, in our model one can optimize the
set, i.e., the structure of the interference.  The solution of this
paper shows that the optimal interference has a superposition
structure.

\section{System Model}
Consider a ZIC with two transition probabilities $p(y_1|x_1,x_2)$
and $p(y_2|x_2)$. The input and output alphabets are
$\mathcal{X}_1$, $\mathcal{X}_2$, $\mathcal{Y}_1$ and
$\mathcal{Y}_2$.

Let $W_1$ and $W_2$ be two independent messages uniformly
distributed on $\{1,2,\cdots,M_1\}$ and $\{1,2,\cdots, \break
M_2\}$, respectively. Transmitter $i$ wishes to send message $W_i$
to Receiver $i$, $i=1,2$. Transmitter 1 is cognitive in the sense
that, in addition to knowing $W_1$, it knows the message $W_2$. An
$(M_1,M_2,n,\epsilon_n)$ code for this channel consists of a
sequence of two encoding functions
\begin{align}
f_1^n:& \{1,2,\cdots,M_1\} \times \{1,2,\cdots, M_2\} \rightarrow \mathcal{X}_1^n,\\
f_2^n:& \{1,2,\cdots,M_2\} \rightarrow \mathcal{X}_2^n,
\end{align}
and two decoding functions
\begin{align}
g_i^n: \mathcal{Y}_i^n \rightarrow \{1,2,\cdots,M_i\}, \qquad i=1,2
\end{align}
with probability of error
\begin{align}
\epsilon_n=\max_{i=1,2} \quad \frac{1}{M_1 M_2} \sum_{w_1,w_2}
\text{Pr} \left[g_i^n(Y_i^n) \neq w_i|W_1=w_1,W_2=w_2 \right].
\end{align}
A rate pair $(R_1,R_2)$ is said to be achievable if there exists a
sequence of $\left(2^{nR_1}, 2^{nR_2}, n, \epsilon_n \right)$ codes
such that $\epsilon_n \rightarrow 0$ as $n \rightarrow \infty$. The
capacity region of the cognitive ZIC is the closure of the set of
all achievable rate pairs.
\begin{figure}[t]
\center\epsfig{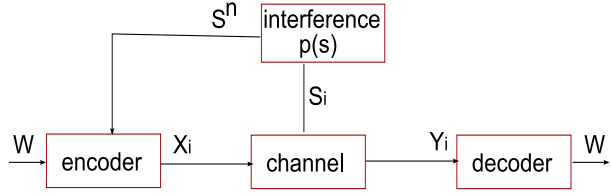}
\caption[]{Gel'fand-Pinsker problem.}
      \label{GPproblem}
\end{figure}

A cognitive ZIC with one noiseless component is a cognitive ZIC
where the channel between $X_2$ to $Y_2$ is noiseless, i.e.,
$p(y_2|x_2)$ is a deterministic one-to-one function.

Throughout the paper, we use the following shorthand for random
vectors: $K^i\overset{\triangle}{=}K_1, \break K_2, \cdots, K_i$ and
$K_{i+1}^n \overset{\triangle}{=}K_{i+1}, K_{i+2}, \cdots, K_n$.

\section{Converse}
In this section, we provide an upper bound on the capacity region of
the cognitive ZIC.

\begin{theorem} \label{thmconverse}
Achievable rate pairs $(R_1, R_2)$  belong to a union of rate
regions given by
\begin{align}
R_1& \leq I(U;Y_1|V)-I(U;Y_2|V) \label{conresult1}\\
R_2 & \leq I(X_2;Y_2|V)+ \min \left\{ I(V;Y_1), I(V;Y_2) \right\}
\label{conresult2}
\end{align}
where the union is over all probability distributions
$p(v,u,x_2)p(x_1|u,x_2)$ and the mutual informations are calculated
according to distribution
\begin{align}
p(v,u,x_1,x_2,y_1,y_2)=p(v,u,
x_2)p(x_1|u,x_2)p(y_1|x_1,x_2)p(y_2|x_2). \label{dis}
\end{align}
\end{theorem}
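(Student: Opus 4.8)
The plan is to prove the converse by applying the Korner–Marton identification technique twice, once for each auxiliary random variable. Recall from the problem setup that $W_1$ and $W_2$ are independent, Transmitter~1 knows both messages, and the channel factors as $p(y_1|x_1,x_2)p(y_2|x_2)$. The goal is to identify auxiliaries $V$, $U$ satisfying the Markov structure implicit in \eqref{dis} and to show that any achievable $(R_1,R_2)$ satisfies \eqref{conresult1}–\eqref{conresult2} for some admissible distribution.

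First I would start from Fano's inequality, which gives $nR_i \leq I(W_i;Y_i^n) + n\epsilon_n'$ for $i=1,2$, with $\epsilon_n'\to 0$. The bound on $R_2$ in \eqref{conresult2} should come most directly: I expect to bound $I(W_2;Y_2^n)$ and $I(W_2;Y_1^n)$ separately and then take the minimum, using the single-letterization $I(X_2;Y_2|V)$ for the part that passes cleanly through the memoryless $p(y_2|x_2)$ channel, with the $\min\{I(V;Y_1),I(V;Y_2)\}$ term capturing the fact that the ''inner'' codeword carrying part of $W_2$ must be decodable at whichever receiver is the bottleneck. The natural identification is $V_i = (W_2, Y_1^{i-1}, Y_{2,i+1}^n)$ or some one-sided variant; the precise choice of past/future conditioning is exactly what the Korner–Marton technique pins down, so I would set $V_i$ to include $W_2$ together with the appropriate mix of past outputs at one receiver and future outputs at the other.

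The more delicate bound is \eqref{conresult1}, $R_1 \leq I(U;Y_1|V)-I(U;Y_2|V)$, which has the characteristic Gel'fand–Pinsker \emph{difference} form. Here I would expand $I(W_1;Y_1^n|W_2)$ (conditioning on $W_2$ is legitimate since the messages are independent and both are available to Transmitter~1) and introduce the telescoping Csisz\'ar-sum step: the second application of the Korner–Marton identity converts a sum of differences $I(\cdot;Y_{1,i}|\cdots) - I(\cdot;Y_{2,i}|\cdots)$ into single-letter form by identifying $U_i = (W_1, V_i, \text{appropriate outputs})$. The subtraction arises precisely because $Y_2$ depends only on $X_2$ (hence carries information about $W_2$ but serves as the ''cost'' side information against which $W_1$ is precoded), mirroring the structure of the GP converse. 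I would then check that the identified $(V_i,U_i,X_{1,i},X_{2,i})$ respect the factorization $p(v,u,x_2)p(x_1|u,x_2)$ and the channel Markov chains in \eqref{dis}.

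The main obstacle I anticipate is bookkeeping the two Csisz\'ar-sum manipulations so that the \emph{same} auxiliary $V$ serves consistently in all three mutual-information terms \eqref{conresult1}–\eqref{conresult2} while simultaneously yielding the correct one-sided conditioning (past outputs at one receiver, future at the other) for each term. Applying the identity twice means the choice of $V_i$ must be compatible with a nested choice of $U_i$, and a mismatch in which outputs are conditioned on would break either the Markov structure or the telescoping. I would finish with a standard single-letterization: introduce a uniform time-sharing index, absorb it into $V$, invoke continuity and the cardinality bounds to restrict the auxiliary alphabets, and let $n\to\infty$ so that $\epsilon_n'\to 0$, yielding the stated union over $p(v,u,x_2)p(x_1|u,x_2)$.
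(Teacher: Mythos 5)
Your high-level plan --- Fano's inequality followed by two applications of the Csisz\'ar-sum identity, with $V_i$ built from past outputs of one receiver and future outputs of the other and $U_i=(V_i,W_1)$ --- is indeed the paper's strategy. But two concrete steps go wrong. First, your proposed route to the $\min\{I(V;Y_1),I(V;Y_2)\}$ term in \eqref{conresult2} is to bound $I(W_2;Y_1^n)$ as well as $I(W_2;Y_2^n)$ and take the minimum; this is not legitimate, because Fano gives $nR_2\leq I(W_2;Y_2^n)+n\epsilon_n$ only --- Receiver 1 is under no obligation to decode $W_2$, so $I(W_2;Y_1^n)$ is not an upper bound on $nR_2$. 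The actual mechanism in the paper is different and is the crux of the converse: the identity $H(Y_1^n)-H(Y_2^n)=\sum_i \bigl(H(Y_{1i}|V_i)-H(Y_{2i}|V_i)\bigr)$ with $V_i=(Y_1^{i-1},Y_{2(i+1)}^n)$ forces the \emph{same} excess $\gamma:=\tfrac{1}{n}H(Y_1^n)-H(Y_1|V)=\tfrac{1}{n}H(Y_2^n)-H(Y_2|V)$ for both outputs, and since $H(Y_j^n)\leq nH(Y_j)$ this common $\gamma$ is simultaneously at most $I(V;Y_1)$ and at most $I(V;Y_2)$. Substituting $\tfrac{1}{n}H(Y_2^n)=H(Y_2|V)+\gamma$ into the Fano bound for $R_2$ (after reducing it to $H(Y_2^n)-\sum_i H(Y_{2i}|X_{2i})$ via the Markov chain $W_2\rightarrow X_2^n\rightarrow Y_2^n$ and memorylessness) is what produces $I(X_2;Y_2|V)+\min\{I(V;Y_1),I(V;Y_2)\}$. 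Your proposal never identifies this step, and without it the $R_2$ bound does not come out.

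Second, your candidate auxiliary $V_i=(W_2,Y_1^{i-1},Y_{2(i+1)}^n)$ cannot work: since $X_2^n=f_2^n(W_2)$, putting $W_2$ inside $V$ makes $X_2$ a function of $V$ and forces $I(X_2;Y_2|V)=0$, destroying \eqref{conresult2}. The paper takes $V_i=(Y_1^{i-1},Y_{2(i+1)}^n)$ with no message inside, and likewise does not condition the $R_1$ chain on $W_2$; instead the Gel'fand--Pinsker difference in \eqref{conresult1} emerges from writing $nR_1\leq H(Y_1^n)-H(Y_1^n|W_1)+n\epsilon_n$, applying the Csisz\'ar-sum identity once unconditionally and once conditioned on $W_1$ (which is what yields $U=(V,W_1)$), and then using $H(Y_2^n|W_1)=H(Y_2^n)$ --- valid because $Y_2^n$ depends only on $X_2^n=f_2^n(W_2)$ and the $p(y_2|x_2)$ noise, both independent of $W_1$ --- to cancel the $Y_2^n$ entropies. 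Your sketch gestures at the right intuition for the subtraction but leaves both this cancellation and the $\gamma$ bookkeeping unresolved, and these are precisely the nontrivial parts of the proof.
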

\begin{proof}
The proof is provided in Section \ref{proof_thmconverse}.
\end{proof}
This converse result is obtained by using the converse technique of
Korner/Marton \cite[page 314]{Csiszar:book} two times, resulting in
two auxiliary random variables.

\section{Achievability}
The achievability scheme uses a combination of superposition
encoding at the non-cognitive encoder and GP encoding of the outer
codeword of interference at the cognitive encoder. The performance
is given in the following theorem.

\begin{theorem} \label{thmach}
The union of rate regions given by
\begin{align}
R_1& \leq I(U;Y_1|V)-I(U;X_2|V) \label{achresult1}\\
R_2 & \leq I(X_2;Y_2|V)+ \min \left\{ I(V;Y_1), I(V;Y_2) \right\}
\label{achresult2}
\end{align}
is achievable, where the union is over all probability distributions
$p(v,u,x_2)p(x_1|u,x_2)$ and the mutual informations are calculated
according to the distribution in  (\ref{dis}).
\end{theorem}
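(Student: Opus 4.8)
The plan is to construct a random coding scheme that combines superposition coding at the non-cognitive encoder with Gel'fand--Pinsker (binning) encoding at the cognitive encoder, and to show that the rate pairs in \eqref{achresult1}--\eqref{achresult2} are achievable. Fix a distribution $p(v,u,x_2)p(x_1|u,x_2)$. The idea is to split the non-cognitive message $W_2$ into two parts by means of the superposition structure: the ``cloud center'' is carried by $V$ and the ``satellite'' by $X_2$ given $V$. The codeword $V^n$ is to be decoded by \emph{both} receivers (which is where the $\min\{I(V;Y_1),I(V;Y_2)\}$ term comes from), while the refinement $X_2^n$ need only be decoded by Receiver~2, contributing the $I(X_2;Y_2|V)$ term to $R_2$. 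Against the resulting interference pattern, the cognitive encoder performs GP coding of $U^n$ treating the already-chosen $(V^n,X_2^n)$ as noncausally known state, which yields the $R_1 \le I(U;Y_1|V) - I(U;X_2|V)$ bound.

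First I would describe the codebook generation. Generate $2^{nR_v}$ sequences $V^n$ i.i.d.\ according to $p(v)$; for each $V^n$, generate the satellite $X_2^n$ codewords i.i.d.\ according to $\prod p(x_2|v_i)$, indexing them so that $(V^n, X_2^n)$ jointly carry $W_2$. Then, conditioned on each $V^n$, build a GP codebook: generate roughly $2^{n(R_1 + R_1')}$ sequences $U^n$ i.i.d.\ according to $\prod p(u|v_i)$, partition them into $2^{nR_1}$ bins, and require $R_1' \approx I(U;X_2|V)$ so that each bin contains, with high probability, a $U^n$ jointly typical with the realized $(V^n,X_2^n)$; this is the standard GP covering step, now performed conditionally on $V$. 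The cognitive encoder, knowing both $W_1$ and $W_2$ (hence $V^n$ and $X_2^n$), selects such a jointly typical $U^n$ from the bin indexed by $W_1$ and transmits $X_1^n$ drawn from $p(x_1|u,x_2)$.

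Next I would analyze the decoders and the error events. Receiver~2 observes $Y_2^n$ and decodes $(V^n, X_2^n)$, hence $W_2$; by standard superposition-decoding arguments the reliability conditions are $R_v \le I(V;Y_2)$ for the cloud center and $R_v + (\text{satellite rate}) \le I(X_2;Y_2)$ jointly, which after combining give the $R_2 \le I(X_2;Y_2|V) + I(V;Y_2)$ constraint. Receiver~1 observes $Y_1^n$ and must decode both $V^n$ and $W_1$; it first decodes $V^n$, requiring $R_v \le I(V;Y_1)$, and then decodes the bin index of $U^n$ from $Y_1^n$ given the decoded $V^n$. The two constraints on $R_v$ combine into the $\min\{I(V;Y_1),I(V;Y_2)\}$ term. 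The GP decoding at Receiver~1, after accounting for the covering rate $R_1' = I(U;X_2|V)$ spent in the binning, yields the net rate $R_1 \le I(U;Y_1|V) - I(U;X_2|V)$. I would bound each error probability (covering failure, and the various joint-typicality decoding failures at the two receivers) and invoke the usual averaging argument to extract a good deterministic code.

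The main obstacle, and the step requiring the most care, is the joint handling of the GP binning \emph{conditioned on} $V$ together with the superposition structure: one must verify that the covering step succeeds uniformly over the random $(V^n, X_2^n)$ and that the error analysis at Receiver~1 correctly separates the cost of decoding $V^n$ from the cost of the GP decoding of $U^n$ given $V^n$, so that the rate penalty for the binning is exactly $I(U;X_2|V)$ and not, say, $I(U;X_2,V)$. Getting this conditioning right is what produces the conditional mutual informations in \eqref{achresult1}--\eqref{achresult2} and is the key point where this scheme departs from a naive combination of standard superposition and GP arguments.
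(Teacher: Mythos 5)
Your proposal is correct and follows essentially the same route as the paper's proof: superposition coding at Transmitter 2 with the cloud center $V^n$ decoded by both receivers, Gel'fand--Pinsker binning of $U^n$ against $X_2^n$ conditioned on $V^n$ at the cognitive encoder (covering rate $I(U;X_2|V)$), and the same decoding constraints combined to yield \eqref{achresult1}--\eqref{achresult2}. The only cosmetic difference is notation (your $R_v$ and $R_1'$ are the paper's $\gamma$ and $R_0$).
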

\begin{proof}
The proof is provided in Section \ref{proof_thmach}.
\end{proof}
{\bf Remark:} The proposed achievability scheme is a special case of
the independent and concurrent work  \cite[Theorem 2]{Cao:2008} by
setting $U_{10}=V$, $(U_{11}, U_{10})=(V_{11}, U_{10})=X_1$,
$V_{20}=\phi$, $V_{22}=U$, $L_{20}=R_{20}=0$, $L_{11}=R_{11}$ and
then swapping the indices of $1$ and $2$ because in \cite{Cao:2008},
the \emph{second} transmitter-receiver pair is cognitive.

\section{Capacity Region of the Cognitive ZIC with One Noiseless Component} \label{capacity}
In general, the achievability results in
(\ref{achresult1})-(\ref{achresult2}) and the converse results in
(\ref{conresult1})-(\ref{conresult2}) do not meet, due to the fact
that
\begin{align}
I(U;X_2|V) \geq I(U;Y_2|V)
\end{align}
because the random variables satisfy (\ref{dis}) which implies that
Markov chain $U \rightarrow (V,X_2) \rightarrow Y_2$ holds.

However, in the case where the channel output $Y_2=X_2$, the
achievability results and the converse results meet, yielding the
capacity region. More specifically, we have the following capacity
results for the cognitive ZIC.

\begin{theorem} \label{thmcapacity}
For cognitive ZIC with one noiseless component,  i.e., $p(y_2|x_2)$
is a deterministic one-to-one function, the capacity region is given
by the union of rate regions:
\begin{align}
R_1& \leq I(U;Y_1|V)-I(U;X_2|V) \label{capcap1} \\
R_2 & \leq H(X_2|V)+ \min \left\{ I(V;Y_1), I(V;X_2) \right\}
\label{capcap2}
\end{align}
where the union is over all probability distributions
$p(v,u,x_2)p(x_1|u,x_2)$ and the mutual informations are calculated
according to the distribution in (\ref{dis}).
\end{theorem}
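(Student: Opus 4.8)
The plan is to invoke the two bounds already in hand and show that, under the noiseless assumption that $p(y_2|x_2)$ is a deterministic one-to-one map, they collapse onto the single region in (\ref{capcap1})-(\ref{capcap2}). The only quantity that differs between the achievable region of Theorem~\ref{thmach} and the converse region of Theorem~\ref{thmconverse} is the term subtracted in the $R_1$ bound: achievability uses $I(U;X_2|V)$ while the converse uses $I(U;Y_2|V)$, and in general $I(U;X_2|V)\ge I(U;Y_2|V)$ by the Markov chain $U\rightarrow(V,X_2)\rightarrow Y_2$ noted just before the theorem. The whole argument therefore rests on a single observation: when $Y_2=X_2$ this gap closes.

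First I would handle the achievability direction. Setting $Y_2=X_2$ in (\ref{achresult1})-(\ref{achresult2}), the subtracted term $I(U;X_2|V)$ is already in the desired form, so (\ref{achresult1}) is exactly (\ref{capcap1}). For the $R_2$ bound, since $Y_2$ is a one-to-one (hence invertible) function of $X_2$, conditioning on $V$ gives $I(X_2;Y_2|V)=H(X_2|V)-H(X_2|Y_2,V)=H(X_2|V)$, and likewise $I(V;Y_2)=I(V;X_2)$; substituting these turns (\ref{achresult2}) into (\ref{capcap2}). Thus Theorem~\ref{thmach} already delivers the claimed region as an inner bound.

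Next I would handle the converse. Substituting $Y_2=X_2$ into (\ref{conresult1})-(\ref{conresult2}) of Theorem~\ref{thmconverse}, the invertibility of the map yields $I(U;Y_2|V)=I(U;X_2|V)$, so the $R_1$ bound (\ref{conresult1}) becomes precisely (\ref{capcap1}). Using again $I(X_2;Y_2|V)=H(X_2|V)$ and $I(V;Y_2)=I(V;X_2)$, the $R_2$ bound (\ref{conresult2}) becomes (\ref{capcap2}). Hence the outer bound coincides with the inner bound over the same family of input distributions $p(v,u,x_2)p(x_1|u,x_2)$, and the two meet.

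Since the achievable region and the converse region are then identical and both equal the union in (\ref{capcap1})-(\ref{capcap2}), this union is the capacity region. The only point requiring care — and it is a mild one — is confirming that $Y_2=X_2$ forces $I(U;X_2|V)=I(U;Y_2|V)$, together with the two rewritings $I(X_2;Y_2|V)=H(X_2|V)$ and $I(V;Y_2)=I(V;X_2)$; each is immediate once one uses that a deterministic bijection preserves the relevant conditional mutual informations. There is no genuine analytic obstacle here, because the real technical work has been front-loaded into the two-fold Korner/Marton converse of Theorem~\ref{thmconverse} and the superposition-plus-Gel'fand-Pinsker achievability of Theorem~\ref{thmach}; the present theorem is precisely the point at which those two efforts are shown to match.
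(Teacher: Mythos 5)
Your proposal is correct and follows essentially the same route as the paper, which proves Theorem \ref{thmcapacity} simply by observing that the achievable region of Theorem \ref{thmach} and the outer bound of Theorem \ref{thmconverse} coincide once $Y_2$ is a deterministic one-to-one function of $X_2$, since then $I(U;Y_2|V)=I(U;X_2|V)$, $I(X_2;Y_2|V)=H(X_2|V)$, and $I(V;Y_2)=I(V;X_2)$. The identifications you verify are exactly the ones the paper relies on, so there is nothing to add.
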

{\bf Remark:} Similar to the solution of the GP problem, one may
restrict $p(x_1|u,x_2)$ to be a deterministic function, i.e.,
$p(x_1|u,x_2)$ only takes the values of $0$ and $1$, in the union in
Theorem \ref{thmcapacity}. To see this, observe that for a fixed
$p(v,u,x_2)$, only (\ref{capcap1}) and the term $I(V;Y_1)$ in
(\ref{capcap2}) depend on $p(x_1|u,x_2)$. The right-hand side of
(\ref{capcap1}) can be written as
\begin{align}
I(U;Y_1|V)-I(U;X_2|V)=\sum_{v} p(v) \left( I(U;Y_1|V=v)-I(U;X_2|V=v)
\right)
\end{align}
which is a linear combination of convex functions of $p(x_1|u,x_2)$
\cite[Proposition 1 (ii)]{Gelfand:1980}. Thus, the right-hand side
of (\ref{capcap1}) is a convex function of $p(x_1|u,x_2)$ and the
maximum is achieved by a deterministic function. For the fixed
$p(v)$, $I(V;Y_1)$ is convex in $p(y_1|v)$ \cite[Theorem
2.7.4]{Cover:book}, which is a linear function of $p(x_1|u,x_2)$ for
fixed $p(u,x_2|v)$ and $p(y_1|x_1,x_2)$. Thus, $I(V;Y_1)$ is a
convex function of $p(x_1|u,x_2)$ and the maximum is achieved by a
deterministic function. Hence, both (\ref{capcap1}) and $I(V;Y_1)$
in (\ref{capcap2}) are maximized by a deterministic $p(x_1|u,x_2)$.

We conclude from Theorem \ref{thmcapacity} that, in the special case
of noiseless channel between the interference-free
transmitter-receiver pair, to minimize the effect of interference
caused to the cognitive transmitter-receiver pair, the non-cognitive
pair uses superposition encoding, allowing the cognitive pair to
decode the inner codeword. In contrast to the Han-Kobayashi scheme
\cite{Han:1981} for the traditional ZIC, where the outer codeword of
the interferer is treated as noise, here, due to the cognitive
capability of the transmitter that faces interference, GP encoding
is performed on the outer codeword to further reduce the effect of
interference.

The capacity region of the discrete memoryless cognitive IC is known
in some special cases \cite{WuVishwanath2006, MYK2007, Cao:2008}.
The cognitive ZIC with one noiseless component is not a special case
of \cite[Theorem 3]{MYK2007} as it does not satisfy either of the
two conditions of strong interference. It satisfies Assumption 3.1
but not Assumption 3.2 in \cite{WuVishwanath2006}, and therefore its
capacity region is not characterized by \cite[Theorem
3.4]{WuVishwanath2006}. The capacity results in Theorem
\ref{thmcapacity} is not a special case of \cite[Theorem
5]{Cao:2008} as the received signal of the cognitive pair is not a
deterministic function of the two channel inputs. Rather, in the
cognitive ZIC with one noiseless component, the received signal of
the non-cognitive pair is a deterministic function. Furthermore, it
does not satisfy the mutual information inequality required in
\cite[Theorem 5]{Cao:2008}.


\section{Discussion}
In the case where $Y_2=X_2$, the cognitive ZIC problem can be seen
as a form of generalized GP problem, where $X_2$ is the channel
state that affects the communication between transmitter-receiver
pair 1. This formulation generalizes the GP problem in the sense
that, instead of the state (random parameters of the channel) being
i.i.d., the state is uniformly distributed on a set of size
$2^{nR_2}$. Furthermore, we are allowed the freedom, not only to
design the codebook of the cognitive transmitter, but also the
structure of the set where the states lie, in order to maximize the
number of bits transmitted between the cognitive pair. We are then
interested in the capacity of the cognitive transmitter-receiver
pair, denoted as $C(R_2)$, which is a function of $R_2$.

Using the capacity region for the cognitive ZIC with one noiseless
component in Theorem \ref{thmcapacity},  we see that the capacity of
the cognitive pair when the state uniformly takes a value from a set
of $2^{nR_2}$ sequences is
\begin{align}
C(R_2) = \max_{p(v,u,x_2), p(x_1|u,x_2)} I(U;Y_1|V)-I(U;X_2|V)
\label{generalizedGP}
\end{align}
where the maximum is over all distributions $p(v,u,x_2),
p(x_1|u,x_2)$ that satisfy
\begin{align}
H(X_2|V)+\min \left \{ I(V;Y_1), I(V;X_2) \right \} \geq R_2
\label{c}
\end{align}
Thus, in the generalized GP problem, when given the rate of the
possible channel states $R_2$, the optimal interference has a
superposition structure.

\noindent {\bf Remark:} When $R_2=\log |\mathcal{X}_2|$, $C(R_2)$
reduces to the GP rate where the state is i.i.d. and uniformly
distributed on set $\mathcal{X}_2$. This can be seen as follows:
first, by choosing $V=\phi$ and $p(x_2)$ to be the uniform
distribution on $\mathcal{X}_2$ in the maximization of
(\ref{generalizedGP}), we obtain the GP rate. Hence, we conclude
that $C(\log |\mathcal{X}_2|)$ is no smaller than the GP rate. On
the other hand, when $R_2=\log |\mathcal{X}_2|$, according to
(\ref{c}), the distribution we are allowed to maximize over in
(\ref{generalizedGP}) has to satisfy
\begin{enumerate}
\item $p(x_2)$ is the uniform distribution on $\mathcal{X}_2$
\item $I(V;Y_1) \geq I(V;X_2)$
\end{enumerate}
which means
\begin{align}
C(\log |\mathcal{X}_2|) &= \max_{p(v,u|x_2)p(x_1|u,x_2):\hspace{0.1in} I(V;Y_1) \geq I(V;X_2)} I(U;Y_1|V)-I(U;X_2|V) \label{equal1}\\
& \leq \max_{p(v,u|x_2)p(x_1|u,x_2):\hspace{0.1in} I(V;Y_1) \geq I(V;X_2)} I(V;Y_1)+I(U;Y_1|V)-I(V;X_2)-I(U;X_2|V) \label{equal2}\\
&=\max_{p(v,u|x_2)p(x_1|u,x_2):\hspace{0.1in} I(V;Y_1) \geq I(V;X_2)} I(U,V;Y_1)-I(U,V;X_2) \label{equal3}\\
& \leq \max_{p(v,u|x_2)p(x_1|u,v,x_2)} I(U,V;Y_1)-I(U,V;X_2)
\label{equal4}
\end{align}
where in (\ref{equal1})-(\ref{equal4}), we have implicitly assumed
that $p(x_2)$ is the uniform distribution. By setting
$(U,V)=\bar{U}$ in (\ref{equal4}), we see that (\ref{equal4}) is the
GP rate, which means that $C(\log |\mathcal{X}_2|)$ is no larger
than the GP rate. Thus, we conclude that $C(\log |\mathcal{X}_2|)$
is equal to the GP rate where the state is i.i.d. and uniformly
distributed on set $\mathcal{X}_2$.

\section{Proofs}
\subsection{Proof of Theorem \ref{thmconverse}} \label{proof_thmconverse}
Following from Fano's inequality \cite{Cover:book}, we have
\begin{align}
nR_1 \leq H(Y_1^n)-H(Y_1^n|W_1)+n \epsilon_n \label{rate1}
\end{align}
and
\begin{align}
nR_2
&\leq H(Y_2^n)-H(Y_2^n|W_2)+n \epsilon_n \\
& \leq H(Y_2^n)-H(Y_2^n|W_2,X_2^n)+n \epsilon_n\\
& =H(Y_2^n)-H(Y_2^n|X_2^n)+n \epsilon_n \label{data}\\
& =H(Y_2^n)-\sum_{i=1}^n H(Y_{2i}|X_{2i})+n \epsilon_n
\label{memoryless}
\end{align}
where (\ref{data}) follows from the Markov Chain $ W_2 \rightarrow
X_2^n \rightarrow Y_2^n$, and (\ref{memoryless}) follows from the
memoryless property of the channel $p(y_2|x_2)$.

Applying the technique \cite[page 314, eqn (3.34)]{Csiszar:book}
twice, we obtain
\begin{align}
H(Y_1^n)-H(Y_2^n)&=\sum_{i=1}^n H(Y_{1i}|Y_1^{i-1}, Y_{2(i+1)}^n)-H(Y_{2i}|Y_1^{i-1}, Y_{2(i+1)}^n), \label{one}\\
H(Y_1^n|W_1)-H(Y_2^n|W_1)&=\sum_{i=1}^n H(Y_{1i}|Y_1^{i-1},
Y_{2(i+1)}^n, W_1)-H(Y_{2i}|Y_1^{i-1}, Y_{2(i+1)}^n, W_1).
\label{two}
\end{align}
Define auxiliary random variables as
\begin{align}
V_i=Y_1^{i-1}, Y_{2(i+1)}^n, \qquad i=1,2,\cdots,n.
\end{align}
Further define $Q$ to be an auxiliary random variable that is
independent of everything else and uniform on the set $\{1,
2,\cdots, n\}$, and
\begin{align}
V=(V_Q, Q), \quad U=(V, W_1), \quad X_1=X_{1Q}, \quad X_2=X_{2Q},
\quad Y_1=Y_{1Q}, \quad Y_2=Y_{2Q}. \label{aux}
\end{align}
It is straightforward to check that the random variables thus
defined satisfy (\ref{dis}).

Following from (\ref{one}) and (\ref{two}), we have
\begin{align}
\frac{1}{n} \left( H(Y_1^n)-H(Y_2^n) \right)=H(Y_1|V)-H(Y_2|V) \label{three}\\
\frac{1}{n} \left( H(Y_1^n|W_1)-H(Y_2^n|W_1)
\right)=H(Y_1|U)-H(Y_2|U). \label{four}
\end{align}
Notice that (\ref{three}) implies that there exists a number
$\gamma$ where
\begin{align}
\frac{1}{n} H(Y_1^n)&=H(Y_1|V)+\gamma\\
\frac{1}{n} H(Y_2^n)&=H(Y_2|V)+\gamma \label{SS}\\
0 \leq \gamma & \leq \min \left \{I(V;Y_1), I(V;Y_2) \right\}
\label{range}
\end{align}
where (\ref{range}) follows because $H(Y_1^n) \leq n H(Y_1)$ and
$H(Y_2^n) \leq n H(Y_2)$ and
\begin{align}
H(Y_1^n)=\sum_{i=1}^n H(Y_{1i}|Y_1^{i-1}) \geq \sum_{i=1}^n
H(Y_{1i}|Y_1^{i-1}, Y_{2(i+1)}^n)=n H(Y_1|V)
\end{align}

Following from (\ref{memoryless}), we have
\begin{align}
R_2 & =\frac{1}{n}H(Y_2^n)-\frac{1}{n}\sum_{i=1}^n H(Y_{2i}|X_{2i})+\epsilon_n \nonumber \\
&= H(Y_2|V)+\gamma-H(Y_2|X_2, Q)+\epsilon_n \label{use17} \\
&= H(Y_2|V)+\gamma-H(Y_2|X_2)+\epsilon_n \label{memoryless2}\\
& \leq H(Y_2|V)+\min \left\{ I(V;Y_1), I(V;Y_2) \right \}-H(Y_2|X_2)+\epsilon_n\label{Rs}\\
&=I(X_2;Y_2|V)+ \min \left \{I(V;Y_1), I(V;Y_2) \right \} +
\epsilon_n \label{Markov1}
\end{align}
where (\ref{use17}) follows from (\ref{SS}) and the definition of
the random variables in (\ref{aux}); (\ref{memoryless2}) follows by
the memoryless nature of the channel $p(y_2|x_2)$; (\ref{Rs})
follows from (\ref{range}); and (\ref{Markov1}) follows because the
random variables satisfy (\ref{dis}) which implies that Markov chain
$V \rightarrow X_2 \rightarrow Y_2$ holds.

Following from (\ref{rate1}), we have
\begin{align}
R_1 & \leq \frac{1}{n}H(Y_1^n)-\frac{1}{n}H(Y_1^n|W_1)+\epsilon_n \nonumber\\
& =\frac{1}{n} H(Y_2^n)+H(Y_1|V)-H(Y_2|V)-\frac{1}{n} H(Y_2^n|W_1)-H(Y_1|U)+H(Y_2|U)+\epsilon_n \label{use}\\
& =H(Y_1|V)-H(Y_2|V)-H(Y_1|U)+H(Y_2|U)+\epsilon_n \label{useZ}\\
&=I(U;Y_1|V)-I(U;Y_2|V)+\epsilon_n \label{useMarkov}
\end{align}
where (\ref{use}) follows from (\ref{three}) and (\ref{four});
(\ref{useZ}) follows from the fact that $Y_2^n$ only depends on
$X_2^n$ and the channel noise induced by $p(y_2^n|x_2^n)$, and is
therefore independent of $W_1$; and (\ref{useMarkov})  follows
because the random variables satisfy (\ref{dis}) which implies that
Markov chain $V \rightarrow U \rightarrow (Y_1, Y_2)$ holds.

We obtain the desired upper bound on the capacity region from
(\ref{Markov1}) and (\ref{useMarkov}).

\subsection{Proof of Theorem \ref{thmach}} \label{proof_thmach}
Since the encoding/decoding procedure follows the standard steps,
the detailed calculation of the probability of error is omitted.

\emph{Codebook generation}:

Fix a distribution $p(v, u, x_2)p(x_1|u,x_2)$.

The codebook at Transmitter 2 is generated as follows: generate
$2^{n\gamma}$ sequences $v^n$ in an i.i.d. fashion using $p(v)$.
These $v^n$ sequences constitute the inner codebook. For each $v^n$,
generate $2^{n(R_2-\gamma)}$ sequences $x_2^n$ in an i.i.d. fashion
using $p(x_2|v)$. These $x_2^n$ constitute the outer codebook of
Transmitter 2  associated with $v^n$.

The codebook at Transmitter 1 (the cognitive transmitter) uses the
same inner codebook as Transmitter 2 and the outer codebook of
Transmitter 1 is generated as follows:  for each $v^n$ sequence,
generate $2^{n(R_1+R_0)}$ sequences $u^n$ in an i.i.d. fashion using
$p(u|v)$. These $u^n$ constitute the outer codebook of Transmitter 1
associated with $v^n$ . Randomly distribute them into $2^{nR_1}$
many bins. Each bin will contain approximately $2^{nR_0}$ many $u^n$
sequences.

\emph{Encoding}:

Transmitter 2 splits its message $W_2$ into two independent parts
$W_{2a}$ and $W_{2b}$, with rates $\gamma$ and $R_2-\gamma$,
respectively. For $W_{2a}=w_{2a}$ and $W_{2b}=w_{2b}$, it finds the
$w_{2a}$-th codeword in the inner codebook, denoted as $\bar{v}^n$,
and transmits the $w_{2b}$-th codeword in the outer codebook
(denoted as $\bar{x}_2^n$) of Transmitter 2 associated with
$\bar{v}^n$.

The cognitive encoder knows $W_2$ and therefore knows $\bar{x}_2^n$
and $\bar{v}^n$. For $W_1=w_1$, it look into the $w_1$-th bin in the
outer codebook of Transmitter 1 associated with $\bar{v}^n$, and
find the $u^n$ (denoted as $\bar{u}^n$) that is jointly typical with
$\bar{x}_2^n$ conditioned on $\bar{v}^n$. This can be done almost
always as long as
\begin{align}
R_0 \geq I(U;X_2|V) \label{ach1}
\end{align}
is satisfied. The cognitive encoder then transmit an $x_1^n$
sequence generated i.i.d. conditioned on $\bar{u}^n$ and
$\bar{x}_2^n$ using $p(x_1|u, x_2)$.

\emph{Decoding}: Receiver 2 first finds the unique $v^n$ sequence in
the inner codebook that is jointly typical with received sequence
$y_2^n$ while treating everything else as noise. This can be done if
\begin{align}
\gamma \leq I(V;Y_2) \label{ach2}
\end{align}

Based on the $v^n$ sequence it decoded, Receiver 2 then finds the
unique $x_2^n$ sequence that is jointly typical with $y_2^n$
conditioned on $v^n$ in the outer codebook of Transmitter 2
associated with $v^n$. This can be done if
\begin{align}
R_2-\gamma & \leq I(X_2;Y_2|V) \label{ach3}
\end{align}

Receiver 1 first finds the unique $v^n$ sequence in the inner
codebook that is jointly typical with received sequence $y_1^n$
while treating everything else as noise. This can be done if
\begin{align}
\gamma \leq I(V;Y_1) \label{ach4}
\end{align}

Based on the $v^n$ it decoded, Receiver 1 then finds the unique
$u^n$ sequence that is jointly typical with $y_2^n$  conditioned on
$v^n$ in the outer codebook of Transmitter 1 associated with $v^n$.
This can be done if
\begin{align}
R+R_0 \leq I(U;Y_1|V) \label{ach5}
\end{align}

Based on (\ref{ach1})-(\ref{ach5}), using Fourier-Motzkin
elimination, we obtain the desired result.

\section*{Acknowledgement}
The authors would like to thank Dr. Wei Kang for the helpful
discussions.

\bibliographystyle{unsrt}
\bibliography{ref}

\end{document}